\newcommand{\curv}{\kappa}
\newlength{\imagewidth}
\begin{document}

\title{Quantization of conductance in gapped interacting systems}

\author{Sven Bachmann}
\address{Department of Mathematics \\ University of British Columbia \\ Vancouver, BC V6T 1Z2 \\ Canada}
\email{sbach@math.ubc.ca}

\author{Alex Bols}
\address{ Instituut Theoretische Fysica, KU Leuven  \\
3001 Leuven  \\ Belgium }
\email{alexander.bols@kuleuven.be}

\author{Wojciech de Roeck}
\address{ Instituut Theoretische Fysica, KU Leuven  \\
3001 Leuven  \\ Belgium }
\email{wojciech.deroeck@kuleuven.be}

\author{Martin Fraas}
\address{ Instituut Theoretische Fysica, KU Leuven  \\
3001 Leuven  \\ Belgium }
\curraddr{Department of Mathematics \\ Virginia Tech \\ Blacksburg, VA 24061-0123 \\ USA}
\email{fraas@vt.edu}

\date{\today }

\begin{abstract}   We provide a short  proof of the quantisation of the Hall conductance for gapped interacting quantum lattice systems on the two-dimensional torus. This is not new and should be seen as an adaptation of the proof of~\cite{HastingsMichalakis}, simplified by making the stronger assumption that the Hamiltonian remains gapped when threading the torus with fluxes. We argue why this assumption is very plausible. The conductance is given by Berry's curvature and our key auxiliary result is that the curvature is asymptotically constant across the torus of fluxes.
\end{abstract}

\maketitle

\section{Setup and Results}\label{sec:Results}

\subsection{Preamble}
It is now common lore that the remarkable precision of the plateaus appearing in Hall measurements at low temperatures is explained by linking the Hall conductance with a topological invariant. For translationally invariant, non-interacting systems, it is the Chern number of  the ground state bundle over the Brillouin zone.  In interacting systems, the Brillouin zone is replaced by a torus associated with fluxes threading the system. {In independent works,   Avron and Seiler \cite{AvronSeiler85} and Thouless, Niu and Wu \cite{Thouless85}  prove quantisation of the Hall conductance in this framework \emph{assuming} that the
adiabatic curvature of the ground state bundle is constant, i.e.\ independent of the fluxes\footnote{Thouless and Niu argue in~\cite{Niu87} why the assumption is reasonable, relying on locality arguments that foreshadow the later proof. The assumption can be replaced by averaging the conductance over the flux torus. In a slightly different setting~\cite{Laughlin}, Laughlin argues that the averaging over \emph{one of two} fluxes can actually be justified.}. 

Proving the constancy of curvature, or bypassing it, was considered an open problem \cite{Avron:web}, and was resolved only thirty years later by Hastings and Michalakis in~\cite{HastingsMichalakis} by relying on a crucial locality estimate.  
 
 The present paper gives a streamlined and expository version of the proof in~\cite{HastingsMichalakis}, presenting also a result in the thermodynamic limit. Our version is shorter, at the cost of making a stronger assumption. Indeed, we assume that the gap remains open for the system threaded with fluxes. This is a prerequisite to even speak about the adiabatic curvature on the torus of fluxes, cf.\ the framework discussed above. Remarkably,  \cite{HastingsMichalakis} don't need this assumption as they bypass the use of bundles. 
 
  A recent work of Giuliani, Mastropietro and Porta~\cite{Giuliani:2016gn} yields a similar result, namely the quantization of the Hall conductance for interacting electrons in the thermodynamic limit, restricted to weak interactions. They also bypass the geometric picture in favour of Ward identities and constructive quantum field theory. Finally, we note that the quantization is also well understood via effective field theories \cite{frohlich1995quantum} (in casu: Chern-Simons).

\subsection{Quantum lattice systems}\label{sub:Setup}

We consider a two-dimensional discrete torus $\Gamma$ with $L^2$ sites, which we identify with a square $[0,L) \times [0,L)\cap\bbZ^2$ whose edges are glued together. For simplicity we assume that $L$ is even. A finite-dimensional Hilbert space $\mathcal{H}_x$ is associated to each site of the torus and for a subset $X$ of the torus we define $\mathcal{H}_X = \otimes_{x \in X} \mathcal{H}_x$.  The evolution of the system is governed by a finite range Hamiltonian
\begin{equation}\label{Ham}
H = \sum_{X \subset \Gamma} \Phi(X), \quad \Phi(X) \in \mathcal B(\mathcal{H}_X)
\end{equation}
that is assumed to be gapped, see below. By finite range, we mean that
\begin{equation*}
 \Phi(X)  = 0\quad\text{whenever}\quad \mathrm{diam}(X)>R.
\end{equation*}
As usual, we identify operators acting on a subset $X$ with their trivial extension to $\Gamma$ by
\begin{equation}\label{embedding}
A\in\mathcal B(\caH_X) \quad\longleftrightarrow\quad A\otimes \mathbb I_{\Gamma\setminus X}\in\mathcal B(\caH_\Gamma).
\end{equation}

We are interested in charge transport. The charge at site $x$ is given by a Hermitian operator $Q_x$ that takes integer values, namely its spectrum is a finite subset of $\bbZ$.  The total charge in a region $X$ is then given by
$$
Q_X = \sum_{x \in X} Q_x.
$$
The charge is a locally conserved quantity:
\begin{equation}\label{eq: charge conservation}
[Q_X, \Phi(Y)]=0, \quad \text{if} \quad Y\subset X \subset\Gamma.
\end{equation}
As we will often have to deal with boundaries of spatial regions, we introduce the following sets
\begin{equation*}
X^r = \{x \in \Gamma : \mathrm{dist}(x, X) \leq r\},\qquad X_r = \{x \in \Gamma : \mathrm{dist}(x, \Gamma\setminus X) \leq r\}
\end{equation*}
and
\begin{equation*}
\partial X(r) = X^r\cap X_r,
\end{equation*}
which corresponds to symmetric ribbon of width $2r$ around the boundary of $X$.

We shall denote $\partial X \equiv \partial X(R)$ since this is practically the only case of relevance. In particular, it follows from charge conservation and the fact that $\Phi$ has finite range $R$ that $[Q_X, H] \in \mathcal{B}(\caH_{\partial X})$.

For any $X \subset \Gamma$ and any operator $A$ we write 
\begin{equation*}
\tr_{X} (A) := \frac{1}{\dim \caH_X} \Tr_X(A)
\end{equation*}
for the normalized partial trace $\tr_X: \caB(\caH_\Gamma)\to\caB(\caH_{\Gamma\setminus X})$ with respect to the set $X$.

\noindent\textbf{Remark.}
Whereas the above setting is phrased in terms of a quantum spin system and on rectangular lattice, this is not necessary. One can equally well consider fermions on the lattice and other types of lattices.

In the fermionic picture, the algebras of observables $\mathcal{B}(\caH_X)$ are replaced by the algebra $\caA_X$ of canonical anticommutation relations built upon $l^2(X; \bbC^N)$. The anticommutation properties of fermionic observables require one further restriction and one change to keep the crucial locality properties of a quantum spin system. First of all, the interactions $\Phi(X)$ and charges $Q_x$ must be even in the fermionic creation/annihilation operators. Secondly, the partial trace $\tr_{\Gamma\setminus X}$ must be replaced by another projection $\mathbb{E}_X: \caA_\Gamma\to\caA_X$. See Section~4 in~\cite{BrunoFermions} for details. With this, the Lieb-Robinson bound and its corollaries carry over to lattice fermion systems, see~\cite{BrunoFermions, Bru:2017aa}.

The advantage of this extension is that there are natural examples that fit our scheme, most notably the (second quantized) Haldane and Harper models with a small interaction term added to them, see~\cite{HastingsPerturbation, De-Roeck:aa}.

\subsection{Hamiltonians with fluxes}\label{sec:fluxes}
We consider regions $X_1 = \{ (x_1,x_2) \in\Gamma : 0 \leq x_2 \leq L/2\}$ resp. $X_2 = \{ (x_1,x_2) \in\Gamma : 0 \leq x_1 \leq L/2\}$ and the associated charges $Q_j = Q_{X_j}$.

By charge conservation, $[Q_i, H]$ is supported on $\partial X_i$. If $L/2 > R$ (which we shall assume from now on) then $\partial X_1$ consists of two disjoint ribbons of width $2R$ and centered around the lines $x_2 = 0$ and $x_2 = L/2$. We will denote these ribbons by $\partial X_1^-$ and $\partial X_1^+$ respectively, and introduce the analogous sets for $X_2$. Finally, we let
\begin{equation*}
\Delta = \partial X_1^- \cap \partial X_2^-,
\end{equation*}
see Figure~\ref{fig: four squares}. 

We now define two one-parameter groups of unitaries by
\begin{equation*}
U_j(\varphi) := e^{-\iu \varphi Q_j},\qquad j=1,2.
\end{equation*}
The integrality of the spectrum of $Q_j$ implies that $\varphi\mapsto U_j(\varphi)$ are periodic with period $2\pi$. With this, the \emph{flux Hamiltonians}, which depend on four angles $(\phi_-,\phi_+)= ((\phi^1_-,\phi_-^2),(\phi^1_+,\phi_+^2))\in\bbT^2\times\bbT^2$ (where $\bbT^2 = [0,2\pi]\times[0,2\pi]$ is the $2$-torus), are defined by
\begin{equation*}
H(\phi_-,\phi_+) =\sum_{Y\subset\Gamma} 
 U_2(\phi^2_Y)\str  U_1(\phi^1_Y)\str  \Phi(Y) U_1(\phi^1_Y)  U_2(\phi^2_Y)
\end{equation*}
where
\begin{equation*}
\phi^j_Y = 
\begin{cases}
\phi^j_- & \text{if }Y \cap \partial X_j^- \neq\emptyset \\
\phi^j_+ & \text{if }Y \cap \partial X_j^+ \neq\emptyset \\
0 & \text{if }Y \cap \partial X_j = \emptyset
\end{cases}
\end{equation*}

Note, first of all, that the order of the `twisting', which we take above first across horizontal lines and then across vertical lines is irrelevant as $[Q_1,Q_2] = 0$. Second of all, $H(\phi_-,\phi_+)$ are not all unitarily equivalent to each other. However, for any $\theta\in\bbT^2$, we have that
\begin{equation}\label{GT}
H(\phi_-+\theta,\phi_+-\theta) = (U_1(\theta^1)U_2(\theta^2))\str H(\phi_-,\phi_+) (U_1(\theta^1)U_2(\theta^2)),
\end{equation}
by charge conservation~(\ref{eq: charge conservation}).

Although these general Hamiltonians are briefly needed in the proofs, the key players will be on the one hand the \emph{twist} Hamiltonians
\begin{equation}\label{HT}
\tilde H (\phi) = H(\phi,0),\qquad\phi\in\bbT^2,
\end{equation}
and the \emph{twist-antitwist} Hamiltonians
\begin{equation}\label{HTAT}
H(\phi) = H(\phi,-\phi),\qquad\phi\in\bbT^2,
\end{equation}
Clearly, $\tilde H(0) = H(0) = H$. Moreover, the twist-antitwist Hamiltonians are all unitarily equivalent to each other, but this does not hold for the twist Hamiltonians. The physical picture for the twist Hamiltonian is that
a flux pair $\phi=(\phi_1,\phi_2) \in \bbT^2$ is threaded through the torus along the $x_1,x_2$ axes. We also point out that the fluxes discussed here are  fluxes on top of those possibly contained in $H$, hence not necessarily the total physical fluxes. In particular, $H$ also contains the magnetic field piercing the torus necessary to have a possibly non-zero Hall conductivity.

Finally, we note that by construction,
\begin{equation*}
\mathrm{supp}(H(\phi_-,\phi_+) - H) \subset \partial X_1\cup\partial X_2,
\end{equation*}
as well as
\begin{equation}\label{supp T vs TAT}
\mathrm{supp}(\tilde H(\phi) - H) \subset \partial X_1^-\cup\partial X_2^-,\qquad
\mathrm{supp}(\tilde H(\phi) - H(\phi)) \subset \partial X_1^+\cup\partial X_2^+,
\end{equation}
see Figure~\ref{fig: four squares}.

\begin{figure}
\centering
\captionsetup[subfigure]{width=\imagewidth}%
  \begin{subfigure}[b]{0.49\textwidth}
  	\centering
      \includegraphics[width=0.75\textwidth]{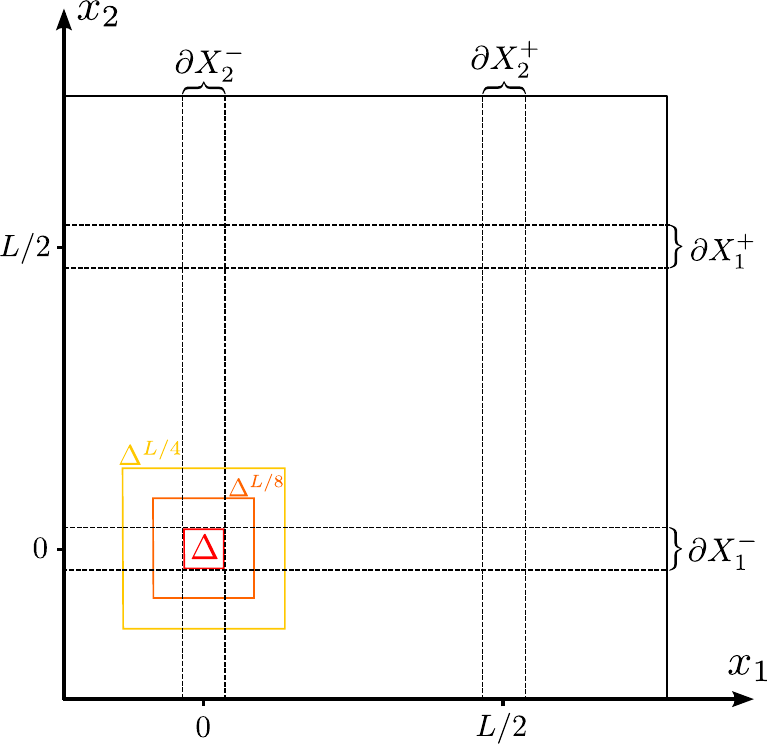}
      \caption{The spatial regions relevant for defining $H(\phi_-,\phi_+)$.}
      \label{fig: four squares}
  \end{subfigure}
  \hfill
  \begin{subfigure}[b]{0.49\textwidth}
  	\centering
      \includegraphics[width=0.75\textwidth]{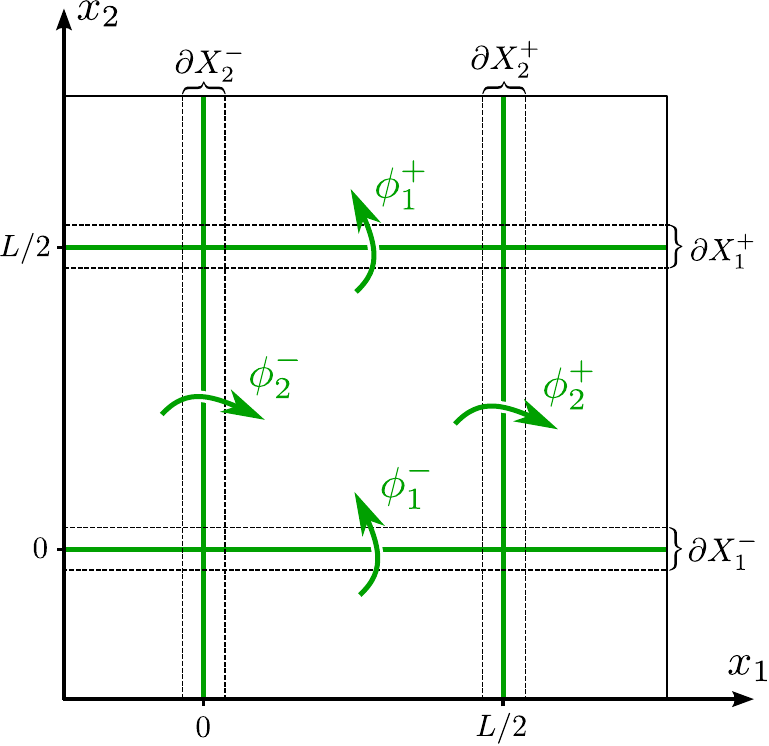}
      \caption{The fluxes are threaded across the green lines.}
      \label{fig:twists}
  \end{subfigure}
  \caption{The adiabatic curvature $[\partial_1 \tilde{P}(\phi), \partial_2 \tilde{P}(\phi)]$ associated with the twist Hamiltonians, for which $\phi_1^+ = \phi_2^+ = 0$, is supported in a neighbourhood of $\Delta$.}
\end{figure}

\begin{assumption}[Gap for all $\phi$]\label{tilde gap}
$\tilde H(\phi)$ has a non-degenerate ground state whose distance to the rest of the spectrum, i.e.\ the \emph{gap}, is bounded below by $\gamma(\phi) > 0$, uniformly in $L$. 
Moreover, $\inf_{{(\phi)\in\bbT^2}} \gamma(\phi) \geq \gamma$ for some $\gamma>0$. 
\end{assumption}
}
This assumption will be in place throughout the entire paper, so we do not repeat it. By gauge covariance (\ref{GT}), all flux Hamiltonians are gapped. Making the assumption is standard in the context of the quantum Hall effect.  Hastings and Michalakis \cite{HastingsMichalakis} assume the gap condition only at one point $\phi = \phi_0$ of the flux torus. In Section~\ref{sec: rationale} we explain why one could believe this assumption to hold true for any $\phi$ if it holds for $\phi = \phi_0$.

\subsection{Results}
We denote by $\tilde{P}(\phi)$ the ground state projection of $\tilde{H}(\phi)$ and, 
for the sake of recognizability, we write
$$
\omega_{\phi}(O)=\omega_{\phi,L}(O) =\Tr (\tilde P(\phi) O)
$$
for the ground state expectations.    The Hall adiabatic curvature is defined by
$$
\curv(\phi) = \iu \omega_{\phi} ([\partial_1 \tilde{P}(\phi), \partial_2 \tilde{P}(\phi)]),
$$
where we denoted $\partial_j = \partial / \partial \phi_j$. The main point proven in this note is 
\begin{prop}\label{prop: constant curvature}
 The Hall adiabatic curvature is asymptotically $\phi$-independent, in that, for any $N>0$,
$$ 
\sup_{\phi,\phi'\in\bbT^2} |\curv(\phi)-\curv(\phi')|  \leq C(N)L^{-N},
$$
where $C(N)$ is independent of $L$.
\end{prop}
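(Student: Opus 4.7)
The plan is to show that $\partial_{\phi_j}\curv(\phi) = \caO(L^{-\infty})$ for $j=1,2$ uniformly in $\phi$, and then integrate along a path in $\bbT^2$ to conclude. I would first invoke quasi-adiabatic continuation for the family $\tilde H(\phi)$. Since the gap is bounded below by $\gamma$ uniformly in $\phi$ by Assumption~\ref{tilde gap}, this produces self-adjoint quasi-local generators $K_j(\phi)$ satisfying $\partial_j\tilde P(\phi) = \iu[K_j(\phi),\tilde P(\phi)]$, whose effective support is inherited from that of $\partial_j \tilde H(\phi) = \iu\,U^*(\phi)[Q_j, H^\Sigma]\,U(\phi)$. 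By charge conservation only terms of $H^\Sigma$ straddling $\partial X_j$ contribute, so this support is essentially concentrated in the ribbon $\partial X_j^-$. Consequently
\[
\curv(\phi) = -\iu\,\omega_\phi\!\left([[K_1(\phi),\tilde P(\phi)],[K_2(\phi),\tilde P(\phi)]]\right)
\]
is, up to exponentially small tails, the expectation of an operator whose support lies in a bounded neighbourhood of $\Delta = \partial X_1^-\cap\partial X_2^-$.

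The second ingredient is the twist-antitwist formulation. From $P(\phi^+,\phi^-) = U^*(-\phi^-)\tilde P(\phi^+ + \phi^-)U(-\phi^-)$, cyclicity of the trace together with $\partial_{\phi^+_j}P = U^*(-\phi^-)\partial_j\tilde P\, U(-\phi^-)$ shows that the curvature $\curv^+(\phi^+,\phi^-) := \iu\,\Tr(P[\partial_{\phi^+_1} P,\partial_{\phi^+_2} P])$ equals $\curv(\phi^+ + \phi^-)$ identically. The point of passing to this enlarged parameter space is that differentiating $H(\phi^+,\phi^-)$ with respect to $\phi^-$ at fixed $\phi^+$ yields a perturbation localized far from $\Delta$: the $H^\Sigma$ contributions cancel and one is left with
\[
\partial_{\phi^-_j} H(\phi^+,\phi^-) = -\iu\,U^*(-\phi^-)\,[Q_j, H_{\Gamma\setminus\Sigma}]\,U(-\phi^-),
\]
whose support lies in $\partial X_j^+\setminus\Sigma$, at distance of order $L/2$ from $\Delta$.

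The core step then invokes a standard stability statement for gapped ground-state expectations: a perturbation of a gapped local Hamiltonian in a region $B$ changes the expectation $\langle O\rangle$ of a quasi-local observable $O$ supported in $A$ by an amount decaying super-polynomially in $\mathrm{dist}(A,B)$. This is proven via another application of quasi-adiabatic continuation, this time along a homotopy of twist-antitwist Hamiltonians interpolating between two values of $\phi^-$, combined with Lieb-Robinson bounds. Applied to the observable $[\partial_{\phi^+_1} P,\partial_{\phi^+_2} P]$ (quasi-localized near $\Delta$) and the perturbation above (in $\partial X_j^+\setminus\Sigma$), it yields $\partial_{\phi^-_j}\curv^+(\phi^+,\phi^-) = \caO(L^{-\infty})$. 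Since $\curv^+(\phi^+,\phi^-) = \curv(\phi^+ + \phi^-)$, the chain rule delivers $\partial_{\phi_j}\curv(\phi) = \caO(L^{-\infty})$, and integration along a path in $\bbT^2$ completes the proof.

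The main obstacle is the precise control of locality in the third step. The generators $K_j(\phi)$ have exponentially decaying but non-compact tails, and the cross-shaped geometry of $\Sigma$ produces subleading ``opposite-corner'' contributions to the supports of $[Q_j, H^\Sigma]$ near $\partial X_1^+\cap\partial X_2^-$ and $\partial X_1^-\cap\partial X_2^+$ in addition to the main piece along $\partial X_j^-$. All of these, together with the tails, must be shown to contribute at most $\caO(L^{-\infty})$ once truncated at the natural separation scale of order $L/4$ (the $\Delta^{L/4}$ of the figure caption), which is a careful but standard piece of quasi-local bookkeeping based on the Lieb-Robinson bound and the quasi-locality of the spectral flow kernel.
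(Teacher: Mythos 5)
Your plan is essentially a differential version of the paper's argument, and its two structural inputs are correct: the exact identity $\curv^+(\phi^+,\phi^-)=\curv(\phi^++\phi^-)$ is the same covariance the paper exploits in~(\ref{eq: cov of P})--(\ref{isosp}), and your formula $\partial_{\phi^-_j}H(\phi^+,\phi^-)=-\iu\,U^*(-\phi^-)[Q_j,H_{\Gamma\setminus\Sigma}]U(-\phi^-)$, supported in $\partial X_j^+\setminus\Sigma$, matches the expression used in the paper's proof. The gap is in the step where you conclude $\partial_{\phi^-_j}\curv^+=\caO(L^{-\infty})$ from the ``local perturbations perturb local expectations locally'' principle. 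That principle transports the expectation of a \emph{fixed} quasi-local observable under a far-away change of the Hamiltonian; but in $\curv^+=\iu\Tr(P[\partial_{\phi^+_1}P,\partial_{\phi^+_2}P])$ the observable $[\partial_{\phi^+_1}P,\partial_{\phi^+_2}P]$ itself depends on $\phi^-$ (through $P$, equivalently through the quasi-adiabatic generators $K^+_i$). Differentiating produces, besides the state-variation term you account for, terms of the type $\Tr(P[\partial_{\phi^-_j}K^+_i,K^+_{i'}])$, i.e.\ mixed second derivatives $\partial_{\phi^-_j}\partial_{\phi^+_i}P$, which your stability lemma does not control and your sketch does not address.

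Moreover, these extra terms are not $\caO(L^{-\infty})$ by the support-separation you invoke, and this is exactly where the ``opposite-corner'' pieces you flag cease to be bookkeeping. The support $\partial X_j^+\setminus\Sigma$ of $\partial_{\phi^-_j}H$ comes within $\caO(R)$ of the ribbon $\partial X_{3-j}^-$ and of the corner pieces of $\partial_{\phi^+_i}H$ at $\partial X_1^+\cap\partial X_2^-$ and $\partial X_1^-\cap\partial X_2^+$; for instance $\partial_{\phi^-_1}K^+_1$ acquires an $O(1)$ piece located at $\partial X_1^+\cap\partial X_2^-$, sitting directly on the support of $K^+_2$, so Lieb--Robinson bounds give no smallness there and an additional argument (a cancellation, or a preliminary reduction of the observable) is required. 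The paper is organized precisely so that the observable is never differentiated: Lemma~\ref{lma: K corner} first replaces $[\tilde K_1,\tilde K_2]$ inside the trace by the explicit proxy $\tr_{\Gamma\setminus\Delta^{L/4}}\bigl(G^\Delta(\phi)\bigr)$, built from the dynamics of $H(\phi)=U^*(\phi)HU(\phi)$ and therefore \emph{exactly} $U(\phi)$-covariant; only the state is then transported, along the homotopy $s\mapsto H(\phi,(s-1)\phi)$ (gapped by Assumption~\ref{tilde gap}), whose generator lies at distance of order $L$ from $\Delta^{L/4}$, and constancy follows from covariance and cyclicity with no differentiation or integration over $\bbT^2$ at all. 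To make your route work you would have to supply the missing estimate on $\partial_{\phi^-_j}\partial_{\phi^+_i}P$ near those corners, or first carry out the reduction to a covariant corner observable --- at which point you are reproducing the paper's proof rather than shortcutting it.
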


Since the integral of curvature is an integer multiple of $2\pi$, see e.g.~\cite{AvronSeiler85}, this immediately implies
\begin{thm}\label{thm:main}
For any $\phi \in \bbT^2$ and any $N>0$
$$
\inf_{n\in \bbZ} | \curv(\phi)-2\pi n|  \leq C(N)L^{-N}.
$$
Moreover, the minimizer $n_0$ is independent of $\phi$.
\end{thm}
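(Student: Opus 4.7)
The plan is to deduce the theorem directly from Proposition~\ref{prop: constant curvature} combined with the standard topological quantization of the integrated Berry curvature; all of the nontrivial work has already been done in the proposition, so what remains is essentially a one-liner.

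First, I would invoke the cited fact (\cite{AvronSeiler85}) that the smooth family of rank-one ground-state projections $\phi\mapsto\tilde P(\phi)$ defines a complex line bundle over the flux torus $\bbT^2$, and its first Chern number is a well-defined integer $n_0=n_0(L)\in\bbZ$ characterised by
$$
\int_{\bbT^2}\curv(\phi')\,d\phi'_1\,d\phi'_2 \;=\; 2\pi\, n_0.
$$
Crucially, $n_0$ is by construction a global topological invariant of the bundle and does not depend on any choice of basepoint $\phi$.

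Next, for any $\phi\in\bbT^2$ I would write the pointwise curvature as its average over $\bbT^2$ plus a fluctuation,
$$
\curv(\phi) \;=\; \frac{1}{(2\pi)^2}\int_{\bbT^2}\curv(\phi')\,d\phi'_1\,d\phi'_2 \;+\; \frac{1}{(2\pi)^2}\int_{\bbT^2}\bigl(\curv(\phi)-\curv(\phi')\bigr)\,d\phi'_1\,d\phi'_2,
$$
and bound the fluctuation term by $C(N)L^{-N}$ using Proposition~\ref{prop: constant curvature}. The first term is a quantized multiple of $2\pi$ (up to the convention-dependent overall factor that is absorbed into a redefined $C(N)$), so $\curv(\phi)$ lies within $\caO(L^{-N})$ of this fixed quantized value. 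Since the same integer $n_0$ realises the quantization for every $\phi$, it also realises the infimum $\inf_{n\in\bbZ}|\curv(\phi)-2\pi n|$ uniformly in $\phi$, giving both conclusions of the theorem simultaneously.

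I do not foresee any substantive obstacle: the Chern-number quantization for smooth line bundles over $\bbT^2$ is classical, and every quantitative estimate needed has been packaged into Proposition~\ref{prop: constant curvature}.
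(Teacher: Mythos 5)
Your proof is exactly the paper's one-liner: $\int_{\bbT^2}\curv$ is a quantized topological invariant (Avron--Seiler), and $\curv$ is nearly constant by Proposition~\ref{prop: constant curvature}, so each pointwise value lies within $\caO(L^{-N})$ of the quantized average and the same integer minimizes the distance for every $\phi$. The only imprecision is your parenthetical that the normalization factor can be ``absorbed into a redefined $C(N)$''---a convention-dependent multiplicative constant rescales the target lattice $2\pi\bbZ$ rather than the additive error---but the paper is equally terse on this point and the structure of your argument is sound.
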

It  is common lore that $\curv(0)$ is the Hall conductance of the original model described by $H$, see~\cite{TKNN}.
The arguments used up to now do not give any information on how $\curv(\phi)$ depends on $L$, and indeed the integer $n_0$ may a priori depend on $L$. To clarify this, it is natural to assume that the state $\omega_0$  has a thermodynamic limit: 
\begin{thm}\label{thm:mainupgrade}
Assume that for any operator $O \in \mathcal B(\mathcal H_X)$ with $X$ finite, the limit
$
\lim_{L\to \infty} \omega_{0,L}(O)
$
exists. Then, the thermodynamic limit of the Hall adiabatic curvature exists and it is quantized:
$$
\lim_{L\to \infty} \curv(0)    \in  2\pi \bbZ.
$$
\end{thm}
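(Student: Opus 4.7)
The strategy is to combine Theorem~\ref{thm:main} with a direct thermodynamic-limit argument for $\curv(0)$ itself. Theorem~\ref{thm:main} provides integers $n_0(L)$ with $|\curv(0)-2\pi n_0(L)| \leq C(N)L^{-N}$, so once we establish that $\lim_{L\to\infty}\curv(0)$ exists, the limit automatically lies in $2\pi\bbZ$ (and $n_0(L)$ is eventually constant).

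To show existence of the limit, I would first express $\curv(0)$ as the expectation of a quasi-local observable localised near $\Delta$. Since $\tilde H(0) = H$ and $\tilde P(0) = P$, one has $\partial_j \tilde H(0) = \iu\,[Q_j, H^{\Sigma}]$, which by charge conservation and finite range is supported in a neighbourhood of $\partial X_j^-$ only (the $\partial X_j^+$ contribution vanishes as that region does not intersect $\Sigma$ within range $R$). The Hastings spectral-flow (quasi-adiabatic) generator $K_j$ associated with this perturbation satisfies $\partial_j \tilde P(0) = \iu\,[K_j, P]$, and standard Lieb--Robinson estimates combined with the gap of Assumption~\ref{tilde gap} show that $K_j$ is quasi-local with faster-than-polynomial tail around $\partial X_j^-$. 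A short manipulation using the non-degeneracy of $P$ collapses the $PK_jP$ terms and yields
$$
\curv(0) \;=\; \iu\, \omega_{0,L}\!\left([K_1,K_2]\right).
$$
Because the essential supports of $K_1$ and $K_2$ meet only in the $\mathcal{O}(1)$-size region $\Delta$, the commutator can be replaced, with an $L^{-\infty}$ error, by an operator $\mathcal{K}_L$ supported in $\Delta^{\ell_L}$ for any $\ell_L$ growing slowly with $L$ (say $\ell_L = L^{1/4}$).

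Next, I would show that $\mathcal{K}_L$ is close in norm to a fixed quasi-local operator $\mathcal{K}_\infty$ on $\bbZ^2$. The spectral flow is an integral of the Heisenberg evolution of $\partial_j \tilde H(0)$ against a rapidly decaying kernel, and the Lieb--Robinson bound truncates the relevant time and space scales to size $\mathcal{O}(L^{a})$ up to $L^{-\infty}$ errors. In this window the Hamiltonian agrees with its counterpart on $\bbZ^2$ (the periodic identifications of the torus are not visible from $\Delta$). Combined with the hypothesis that $\omega_{0,L}(O)$ converges for any $O$ supported in a fixed finite region, a standard approximation (truncate $\mathcal{K}_\infty$ to a finite box $\Lambda_r$, compare with $\mathcal{K}_L$ via quasi-locality, apply the hypothesis at fixed $r$, then send $r\to\infty$) gives
$$
\lim_{L\to\infty}\omega_{0,L}(\mathcal{K}_L) \;=\; \omega_\infty(\mathcal{K}_\infty),
$$
so $\lim_L\curv(0)$ exists and the proof is complete.

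The main technical obstacle is assembling the quasi-locality of $\mathcal{K}_L$ and its convergence to $\mathcal{K}_\infty$ with errors genuinely uniform in $L$. The uniform-in-$L$ gap from Assumption~\ref{tilde gap} is crucial here, since it makes the spectral-flow kernel rapidly decaying with $L$-independent bounds; everything else is a routine Lieb--Robinson computation that is already implicit in the proof of Proposition~\ref{prop: constant curvature}.
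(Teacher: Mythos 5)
Your proposal is correct and follows essentially the same route as the paper: the Kubo-type identity $\curv(0)=\iu\,\omega_{0,L}([\tilde K_1,\tilde K_2])$, uniform-in-$L$ quasi-locality of the generators forcing the commutator to be approximable by observables near $\Delta$, convergence of these local approximants combined with the hypothesis on $\omega_{0,L}$, and Theorem~\ref{thm:main} to place the limit in $2\pi\bbZ$. Your detour through an infinite-volume operator $\mathcal K_\infty$ and limit state is only a cosmetic repackaging of the paper's double limit with the local observables $\Upsilon^\ell$ (and your parenthetical that $\partial X_j^+$ does not meet $\Sigma$ ignores the far corner where $\partial X_1^+$ meets $\partial X_2^-$, a simplification the paper's own discussion also makes).
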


Recent works \cite{BDF16,BDF17,Teufel17} provided a proof that the Hall conductance equals the Hall adiabatic curvature (also) in interacting systems.


\subsection{The rationale for Assumption \ref{tilde gap}} \label{sec: rationale}

Consider, for a function $\alpha: \Gamma \to \bbR$, the unitary (gauge transformation) $U_\alpha=\prod_{x \in \Gamma} e^{-\mathrm i\alpha(x)Q_x }$ and choose, for given $\phi=(\phi_1,\phi_2)$ 
$$
\alpha_{\phi}(x_1,x_2)= {-(1-x_1/L){\phi_1} - (1-x_2/L){\phi_2}}.
$$
Then we check that 
\begin{equation} \label{eq: distributing flux}
U_{\alpha_\phi} \tilde H(\phi) U^*_{\alpha_\phi}=H+ W(\phi)
\end{equation}
where $W=W(\phi)$ is of the form $W=\sum_{X\subset\Gamma} W(X)$ with $W(X) \in \caB(\caH_X)$ and such that
\begin{enumerate}
\item  $W(X)=0$ whenever $\mathrm{diam}(X)>R$,
\item  $\sup_{X\subset\Gamma}    \Vert  W(X)  \Vert  \leq \epsilon $.
\end{enumerate}
In fact, we have here that $\epsilon = C/L$ for some $L$-independent constant $C$. Although this can be checked by a direct calculation, it is best understood as follows. First of all, local charge conservation~\eqref{eq: charge conservation} implies that the effect of a $U_\alpha$ on a local interaction term, say $\Phi(X)$, depends only on the \emph{change} of $\alpha(x)$ over $X$. In the proposed $\alpha_\phi$, this is of order $L^{-1}$ everywhere but across the site $L-1$ and $0$. There however, this abrupt jump of size $-\phi_1$ is precisely compensated by the twist induced by $U(\phi)$ in $\tilde H(\phi)$. Put differently, a twist-antitwist can be removed by a gauge transformation using a vector potential that is a single-valued function on the torus. A twist cannot be removed globally as it corresponds to a multivalued vector potential, but as such its effect can still be made locally small everywhere.

The stability of the spectral gap for a Hamiltonian can be formulated as follows. 
A Hamiltonian $H$ with a non-degenerate ground state has a stable spectral gap, if for any $W$  satisfying conditions $\mathrm{i,ii,}$ with $\epsilon$ sufficiently small but $L$-independent,
$H+W$ has a non-degenerate ground state with a gap, uniformly in $L$.  At the time of writing, stability of the spectral gap has been proven in the case $H$ is \emph{frustration-free}~\cite{Bravyi:2011ea, Michalakis:2013gh} or the second quantization of free fermions~\cite{HastingsPerturbation, De-Roeck:aa}. The latter case being, arguably, the most relevant for quantum Hall effect. Yet, \emph{if} $H = H(0)$ has a stable gap in the precise sense above, then, by \eqref{eq: distributing flux},  Assumption~\ref{tilde gap} holds true as well, i.e.\ all $\tilde H(\phi)$ are gapped uniformly in $\phi$.

On the other hand, counterexamples of Hamiltonians with an unstable gap were constructed \cite{Michalakis:2013gh} or proposed specifically for our setting~\cite{Hastingsprivate}. Unlike our result, \cite{HastingsMichalakis} also covers those cases because the gap assumption there is only made for $\phi=0$. Therefore, the authors of~\cite{HastingsMichalakis} need a vastly more ingenious proof than we do. However, the observation~(\ref{eq: distributing flux}) and the fact that stability holds true for free fermions make us believe that Assumption~\ref{tilde gap} for all $\phi$ is reasonable from the physical point of view.

\section{Preliminaries}

\noindent We recall some standard results on locality of the dynamics of quantum lattice systems that will be crucial for our proofs. 
\subsection{Lieb-Robinson bounds and consequences}

As the flux Hamiltonians $H(\phi_-,\phi_+)$ are sums of local terms, they all satisfy a Lieb-Robinson bound~\cite{Lieb:1972ts,nachtergaele2010lieb}.
\begin{lemma}\label{lemma: lr}
There exists constants $v, C > 0$ such that for any $O_X\in \caB(\caH_X),O_Y\in \caB(\caH_Y)$, 
\begin{equation*}
\Vert [\tau_t(O_X), O_Y] \Vert \leq C \Vert O_X \Vert \Vert O_Y \Vert \min \left( \abs{X}, \abs{Y} \right) \ep{-\left( \dist(X, Y) - v \abs{t} \right)}
\end{equation*}
for all $t\in\bbR$, where $\tau_t(\cdot)$ is the dynamics generated by any flux Hamiltonian.
\end{lemma}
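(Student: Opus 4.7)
The plan is to reduce Lemma~\ref{lemma: lr} to the standard Lieb-Robinson bound of \cite{Lieb:1972ts, nachtergaele2010lieb} by verifying that the twist Hamiltonians $\tilde H(\phi)$ are finite-range sums of local terms whose individual norms are bounded uniformly in $\phi$ and $L$. Once this structural statement is established, the standard proof applies verbatim with constants $v, C$ depending only on $a$, the range $R$, and $\sup_Y \Vert \Phi(Y) \Vert$, and hence independent of $\phi$ and $L$.

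To establish the structural statement, I would write $\tilde H(\phi) = \sum_{Y \subset \Gamma} \tilde\Phi_\phi(Y)$ with $\tilde\Phi_\phi(Y) = U^*(\phi)\Phi(Y)U(\phi)$ when $Y \cap \Sigma \neq \emptyset$ and $\tilde\Phi_\phi(Y) = \Phi(Y)$ otherwise. The key observation is that $U(\phi) = e^{-\iu \langle \phi, Q \rangle}$ is a tensor product of on-site unitaries, because $Q = Q_1 + Q_2$ is itself a sum of commuting on-site charges. Consequently, conjugation by $U(\phi)$ acts site-by-site, preserving both the support of each $\Phi(Y)$ and its operator norm. Thus $\tilde\Phi_\phi(Y) = 0$ whenever $\mathrm{diam}(Y) > R$ and $\Vert \tilde\Phi_\phi(Y) \Vert = \Vert \Phi(Y) \Vert$, giving a finite-range interaction with bounds uniform in $\phi$ and $L$.

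With this in hand, the remainder is the classical iterative argument. Setting $f(t) = [\tau_t^{\tilde H(\phi)}(O_X), O_Y]$, one differentiates in $t$ and integrates to obtain
\begin{equation*}
\Vert f(t) \Vert \leq \Vert f(0) \Vert + \int_0^{|t|} \sum_{Z : Z \cap X \neq \emptyset} 2 \Vert \tilde\Phi_\phi(Z) \Vert \cdot \Vert [\tau_s^{\tilde H(\phi)}(O_Z), O_Y] \Vert \, ds ,
\end{equation*}
and iterating this inequality, using the finite-range structure to constrain the sets $Z$ that can connect $X$ to $Y$, produces the claimed exponential decay in $\dist(X, Y) - v|t|$. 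The prefactor $\min(\abs{X}, \abs{Y})$ appears after choosing whether to develop in $O_X$ or $O_Y$, using the identity $\Vert [A, B] \Vert = \Vert [B, A] \Vert$ and applying the bound to the smaller of the two supports.

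I do not anticipate any real obstacle; the only mildly delicate point is to ensure the locality bound is genuinely $\phi$-independent, which is handled by the on-site, hence locality-preserving, character of $U(\phi)$. In particular, the ribbon geometry of $\Sigma$ plays no role at this stage, and all subsequent quasi-locality corollaries (quasi-adiabatic evolution, exponential clustering, and the spectral-flow bounds used later) will follow uniformly in $\phi$ from this single input.
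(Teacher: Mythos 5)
Your proposal is correct and follows essentially the same route as the paper, which simply observes that $\tilde H(\phi)$ is a finite-range sum of uniformly bounded local terms (the on-site character of $U(\phi)$ making the twist support- and norm-preserving) and then invokes the standard Lieb-Robinson bound of \cite{Lieb:1972ts,nachtergaele2010lieb}; your write-up just makes this reduction and the classical iteration explicit.
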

Note that bound is valid for all flux Hamiltonians and all system sizes. In particular, the constants $v,C$ are chosen to be independent of both $(\phi_-,\phi_+)$ and of $L$.

Here are two direct consequences of the Lieb-Robinson bound.
\begin{lemma}\label{lma:Diff of H}
Let $H, H'$ be two flux Hamiltonians and let $\tau_t^H,\tau_t^{H'}$ be the corresponding dynamics. Then for any $O_X\in\caB(\caH_X)$, 
\begin{equation*}
\left \Vert \tau_t^H (O_X) - \tau_t^{H'}(O_X)\right\Vert \leq C \Vert O_X \Vert \vert X\vert L \ep{-(\mathrm{dist}(\partial X_1 \cup \partial X_2,X)-v\vert t\vert)}.
\end{equation*}
\end{lemma}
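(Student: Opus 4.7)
The plan is to use the standard interaction-picture/Duhamel identity to reduce the difference of evolutions to a time integral of commutators, and then to apply the Lieb-Robinson bound term by term after showing that $H-H'$ is itself localised near $\partial X_1\cup\partial X_2$.

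The key preliminary observation is that any two flux Hamiltonians differ only through local terms supported in an $R$-neighbourhood of $\partial X_1\cup\partial X_2$. Indeed, by definition $\tilde H(\phi)-H=U^*(\phi)H^{\Sigma}U(\phi)-H^{\Sigma}$, which is supported on $\Sigma\subseteq\partial X_1\cup\partial X_2$. For a twist-antitwist Hamiltonian, an extra conjugation by $U(-\phi^-)=e^{\iu\langle\phi^-,Q\rangle}$ is applied; by the local charge conservation~\eqref{eq: charge conservation}, this conjugation leaves every interaction term $\Phi(Y)$ with $Y\cap\partial X_1=Y\cap\partial X_2=\emptyset$ unchanged. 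Hence any flux Hamiltonian can be written as $\sum_Y\Phi'(Y)$ with $\Phi'(Y)=\Phi(Y)$ whenever $Y$ is disjoint from $\partial X_1\cup\partial X_2$, the diameters bounded by $R$, and the norms uniformly bounded. Consequently
\begin{equation*}
H-H'=\sum_Y V(Y),\qquad V(Y)=0\ \text{unless}\ Y\cap(\partial X_1\cup\partial X_2)\neq\emptyset,\ \mathrm{diam}(Y)\leq R,\ \|V(Y)\|\leq C.
\end{equation*}
The cardinality of the nonzero $V(Y)$ is $\mathcal{O}(L)$ since $|\partial X_1\cup\partial X_2|=\mathcal{O}(L)$.

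Next I write, via Duhamel,
\begin{equation*}
\tau_t^H(O_X)-\tau_t^{H'}(O_X)=\iu\int_0^t \tau_s^H\bigl([\,H-H',\,\tau_{t-s}^{H'}(O_X)\,]\bigr)\,ds.
\end{equation*}
Inserting the above decomposition, each term is estimated by Lemma~\ref{lemma: lr} applied to $\tau_{t-s}^{H'}(O_X)$ and the localised $V(Y)$:
\begin{equation*}
\bigl\|[\tau_{t-s}^{H'}(O_X),V(Y)]\bigr\|\leq C\|O_X\|\,\|V(Y)\|\,\min(|X|,|Y|)\,e^{-(\dist(X,Y)-v|t-s|)}.
\end{equation*}
Bounding $\dist(X,Y)\geq\dist(X,\partial X_1\cup\partial X_2)-R$, using $\min(|X|,|Y|)\leq|X|$, and summing the $\mathcal{O}(L)$ nonzero terms yields
\begin{equation*}
\|[H-H',\tau_{t-s}^{H'}(O_X)]\|\leq C' L\,|X|\,\|O_X\|\,e^{-(\dist(\partial X_1\cup\partial X_2,X)-v|t-s|)}.
\end{equation*}
Since the isometry $\tau_s^H$ drops out of the outer norm and the $s$-integral of $e^{v(|t|-s)}$ contributes only a bounded factor that can be absorbed into $C$, integration in $s\in[0,|t|]$ reproduces precisely the stated bound.

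There is no genuine difficulty: the only substantive step is the localisation statement for $H-H'$, which I expect to be the part demanding some care because it requires combining three ingredients (the definition of $\tilde H(\phi)$, the conjugation structure of twist-antitwist Hamiltonians, and charge conservation). Everything downstream is a routine Duhamel plus Lieb-Robinson calculation, with the factor $L$ arising from the boundary volume and the factor $|X|$ from the crude $\min(|X|,|Y|)\leq|X|$.
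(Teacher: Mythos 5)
Your proof is correct and follows essentially the same route as the paper: the Duhamel identity reduces the difference of evolutions to an integral of $[H-H',\tau^{H'}_s(O_X)]$, which is then controlled by the Lieb--Robinson bound together with the fact that $H-H'$ consists of $\mathcal{O}(L)$ bounded local terms supported near $\partial X_1\cup\partial X_2$. The only difference is that you spell out, via charge conservation, the localisation of $H-H'$ (which the paper simply asserts), and that localisation is really within an $\mathcal{O}(R)$-fattening of $\partial X_1\cup\partial X_2$, a discrepancy harmlessly absorbed into the constant.
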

\begin{proof}
Starting from
\begin{equation*}
\tau_{-t}^H \tau_{t}^{H'}(O_X) - O_X = -\iu\int_0^t\tau_{-s}^H\left([H - H',\tau_s^{H'}(O_X)]\right)\dif s,
\end{equation*}
the bound follows by the Lieb-Robinson bound for $\tau^{H'}$, unitarity of the evolution $\tau^H$, and the fact that $\mathrm{supp}(H-H')\subset \partial X_1 \cup \partial X_2$ whose volume is proportional to $L$.
\end{proof}

The second consequence of the Lieb-Robinson bound is then
\begin{lemma}\label{lma:trace of evolved}
Let $H$ be a flux Hamiltonian.  Then for any $O_X\in\caB(\caH_X)$, 
\begin{equation*}
\left \Vert  \tau_t^H(O_X)-\tr_{\Gamma\setminus X^r} \tau_t^H(O_X)\right\Vert \leq C \Vert O_X\Vert\vert X \vert\ep{-(r-v\vert t\vert)},
\end{equation*}
where $X^r = \{x\in\Gamma: \mathrm{dist}(x,X)\leq r\}$.
\end{lemma}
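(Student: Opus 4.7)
The plan is to realize the normalized partial trace $\tr_{\Gamma\setminus X^r}$ as a Haar average of unitary conjugations acting on $\caH_{\Gamma\setminus X^r}$, and then use the Lieb--Robinson bound of Lemma~\ref{lemma: lr} to show that the evolved operator $\tau_t^H(O_X)$ is almost invariant under such conjugations.

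Setting $Y = \Gamma \setminus X^r$, I would first recall the elementary identity
\begin{equation*}
\tr_Y(A) = \int_{U(\caH_Y)} V^* A V \, dV,
\end{equation*}
where $dV$ is Haar measure on the unitary group $U(\caH_Y)$, viewed as a subgroup of $\caB(\caH_\Gamma)$ via the embedding~\eqref{embedding}. This follows at once from $\int_{U(d)} V^* M V\, dV = d^{-1}(\mathrm{Tr}\, M)\,\mathbb I$ for $M \in M_d(\bbC)$, together with the paper's normalisation of the partial trace. Rewriting $A - \tr_Y(A) = \int V^*[V,A]\, dV$ and taking norms gives
\begin{equation*}
\|A - \tr_Y(A)\| \leq \sup_{V \in U(\caH_Y)} \|[A,V]\|.
\end{equation*}

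I would then apply this with $A = \tau_t^H(O_X)$ and estimate each commutator using Lemma~\ref{lemma: lr} (with $a = 1$): for any unitary $V$ supported in $\Gamma\setminus X^r$,
\begin{equation*}
\|[\tau_t^H(O_X), V]\| \leq C\,\|O_X\|\,|X|\,e^{-(\mathrm{dist}(X,\Gamma\setminus X^r) - v|t|)} \leq C\,\|O_X\|\,|X|\,e^{-(r-v|t|)},
\end{equation*}
since $\mathrm{dist}(X,\Gamma\setminus X^r) \geq r$ by definition of $X^r$. Combining the two displays (and absorbing $\|O_X\|$ into the constant, as in the statement) establishes the bound.

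There is no serious obstacle here: the argument is the standard packaging of the principle that a time-evolved local operator is, up to Lieb--Robinson tails, supported in its ``light cone''. The only mildly delicate step is recognising that the normalised partial trace coincides with a Haar average of unitary conjugations, which is a textbook identity. All of the analytic content is already carried by Lemma~\ref{lemma: lr}.
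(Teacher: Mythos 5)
Your argument is correct and is precisely the standard one: the paper does not prove this lemma but refers to \cite{LRYoshiko}, where the normalized partial trace is likewise realized as a Haar average of unitary conjugations on the complement and the difference is bounded by a supremum of commutators controlled via the Lieb--Robinson bound. The only cosmetic point is the factor $\Vert O_X\Vert$, which the paper's statement suppresses and which you handle in the natural way.
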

\noindent For a proof, we refer e.g.~to~\cite{LRYoshiko}. {Note that the norm of the difference is well-defined by the identification~(\ref{embedding}).}

\subsection{Quasi-adiabatic evolution}

For the following result, we refer to~\cite{HastingsWen,Sven}, whose results apply in this context by our gap Assumption~\ref{tilde gap}.

\begin{lemma}\label{lem: quasi adiabatic}
Let $s \mapsto (\phi_-(s),\phi_+(s)) \in \bbT^2\times \bbT^2$ for $s \in [0, 1]$ be a differentiable curve of fluxes, and denote by $H(s)=H(\phi_-(s),\phi_+(s))$
the corresponding family of Hamiltonians. Then there is a family of unitaries $V(s)$ such that the ground state projection $P(s)$ of the Hamiltonian $H(s)$ is given by 
\begin{equation*}
P(s) = V(s)P(0)V^*(s).
\end{equation*}
These unitaries are the unique solution of
\begin{equation*}
-\mathrm i \frac{\dd}{\dd s} V(s) = K(s) V(s), \qquad U(0) = \I,
\end{equation*}
where the generator $K(s)$ can be written as
\begin{equation} \label{def: d}
K(s) = \int \dd t \; W(t) \tau_t^{H(s)}\Big(\frac{\dd}{\dd s} H(s)\Big).
\end{equation}
Here, $W \in L^\infty(\bbR)\cap L^1(\bbR)$ is a specific function~\cite{Sven} such that
\begin{equation}\label{eq: decay of W}
\left\vert W(t)\right\vert = \caO(t^{-\infty}),\qquad \left\vert \int_t^{\infty} \dd t' \; W(t') \right\vert= \caO(t^{-\infty}),
\end{equation}
as $t\to\infty$.
\end{lemma}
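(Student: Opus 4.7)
The generator $K(s)$ in~\eqref{def: d} is engineered so that the unitary $V(s)$ solving $-\iu\,\partial_s V = K(s)V$ with $V(0)=\I$ conjugates $P(0)$ to $P(s)$. Differentiating $V(s)^*P(s)V(s)$ and using Hermiticity of $K(s)$ gives
\[
\frac{d}{ds}\bigl(V^*PV\bigr) = V^*\bigl(\partial_s P - \iu[K, P]\bigr)V,
\]
so the task reduces to arranging the intertwining identity $[K(s),P(s)] = -\iu\,\partial_s P(s)$ by choosing the filter $W$ appropriately.

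I would first compute $\partial_s P$ by Riesz projection calculus. Writing $P(s) = \frac{1}{2\pi\iu}\oint_\Gamma (z-H(s))^{-1}\,dz$ for a small contour $\Gamma$ around the isolated ground state energy $E_0(s)$, denoting $Q = \I - P$ and letting $H_Q$ be the restriction of $H(s)$ to $\mathrm{ran}\,Q$, the decomposition $(z-H)^{-1} = P/(z-E_0) + Q(z-H_Q)^{-1}$ and Cauchy's residue theorem yield
\[
\partial_s P \;=\; P(\partial_s H)Q\,(E_0-H_Q)^{-1} \;+\; (E_0-H_Q)^{-1}\,Q(\partial_s H)P,
\]
which is block off-diagonal with respect to $P/Q$, as required.

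Next I would process $K$ by the same block decomposition applied to $\partial_s H$. The diagonal pieces $P(\partial_s H)P$ and $Q(\partial_s H)Q$ commute with $P$ and drop out of $[K,P]$. The off-diagonal pieces evolve under $e^{\iu t H(s)}$ with scalar phases $e^{\pm\iu t(E_0-H_Q)}$, so the $t$-integral in~\eqref{def: d} produces the functional calculus expression $\widehat W(E_0-H_Q)$ on $\mathrm{ran}\,Q$, where $\widehat W(\omega) := \int_\bbR W(t)\,e^{\iu \omega t}\,dt$. Equating the resulting $[K,P]$ with $-\iu\,\partial_s P$ forces
\[
\widehat W(\omega) \;=\; \frac{\iu}{\omega} \qquad\text{for all } |\omega|\geq\gamma,
\]
while $\widehat W$ remains free on $(-\gamma,\gamma)$ by virtue of Assumption~\ref{tilde gap}.

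The main technical obstacle is the decay estimate~\eqref{eq: decay of W}. A bare $C^\infty$ modification of $\iu/\omega$ on $[-\gamma,\gamma]$ would only yield polynomial decay of $W$, and the fact that $\widehat W$ is pinned to $\iu/\omega$ outside a compact set rules out analyticity. Achieving $\mathcal O(t^{-\infty})$ decay of both $W$ and its tail integral requires $\widehat W$ to lie in a Gevrey class, constructed for instance as an infinite convolution of mollifiers with geometrically shrinking widths. I would import this construction, together with the decay bounds~\eqref{eq: decay of W}, directly from~\cite{Sven}. Once $W$ is in place, $V(s)$ is defined as the unique solution of the linear ODE, and the intertwining identity immediately delivers $P(s) = V(s)P(0)V(s)^*$.
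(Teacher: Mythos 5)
Your proposal is correct and is essentially the argument behind the result the paper itself does not prove but simply imports from~\cite{Sven}: the intertwining identity $[K(s),P(s)]=-\mathrm{i}\,\partial_s P(s)$ via the Riesz-projection block decomposition, the requirement $\widehat W(\omega)=\mathrm{i}/\omega$ for $\vert\omega\vert\geq\gamma$ (using the uniform gap of Assumption~\ref{tilde gap}), and the sub-exponentially decaying filter $W$ taken from~\cite{Sven}. Since the paper's ``proof'' is just this citation, your sketch matches the intended (standard spectral-flow) route, with the only technical ingredient you do not construct yourself --- the function $W$ with the decay~\eqref{eq: decay of W} --- correctly delegated to the same reference.
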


Here and below, we use the notation $\caO(t^{-\infty})$ for a function that decays to zero faster then any rational function.

Since this will be essential for the proofs, we note that by the Lieb-Robinson bound and the fast decay of $W$, the support of $K(s)$ defined in~(\ref{def: d}) is in a neighbourhood of the support of $\frac{\dd}{\dd s}H(s)$. As discussed in Section~\ref{sec:fluxes}, this is $\partial X_1 \cup \partial X_2$ is the case of the twist-antitwist Hamiltonian but only $\partial X_1^- \cup \partial X_2^-$ for the twist Hamiltonians, see Figure~\ref{fig: four squares} again. The support is in fact only $\partial X_j$, resp. $\partial X_j^-$, if $s\mapsto \phi(s)$ is chosen so that only $\phi_j(s)$ varies.


\section{Proofs}
 
\noindent The main point is to prove Propostion~\ref{prop: constant curvature}.
Using the quasi-adiabatic generators $\tilde{K}_j$ associated to changes of $\tilde{P}$ in directions $\phi_j$, namely
\begin{equation*}
\partial_j\tilde{P}(\phi) = \mathrm i[\tilde{K}_j(\phi), \tilde{P}(\phi)],\qquad \tilde{K}_j(\phi) = \int_{\mathbb R} W_{}(t)\tau_t^{\tilde H(\phi)}\left(\partial_j \tilde H(\phi)\right) \mathrm dt
\end{equation*}
and the cyclicity of the trace (the Hilbert space $\caH_\Gamma$ is finite dimensional), we have
\begin{equation}\label{Kubo}
\curv(\phi) = \iu \Tr(\tilde{P}(\phi)[\tilde{K}_1(\phi),\tilde{K}_2(\phi)]).
\end{equation}

We are going to show that this is asymptotically constant by comparing the expression inside the trace with such expression for $P(\phi), K(\phi)$ associated to the twist-antitwist Hamiltonian $H(\phi)$. Although for technical reasons, the proof below is phrased slightly differently, the heart of the argument can be presented in the following brief way. By~(\ref{GT},\ref{HTAT}), the family $H(\phi)$ is isospectral and hence 
\begin{equation}\label{eq: cov of P}
P(\phi) = e^{\mathrm i \langle \phi, Q\rangle} P e^{-\mathrm i \langle \phi, Q\rangle},
\end{equation}
where $\langle \phi, Q\rangle = \phi_1 Q_1 + \phi_2 Q_2$.
Furthermore, $[Q_1,Q_2]=0$ so that
\begin{equation}\label{indep of phi}
K_j(\phi) = e^{\mathrm i \langle \phi, Q\rangle} K_j e^{-\mathrm i \langle \phi, Q\rangle}
\end{equation}
as well, and hence
\begin{equation}\label{isosp}
P(\phi)[K_1(\phi), K_2(\phi)] = e^{\mathrm i \langle \phi, Q\rangle} P[K_1, K_2 ] e^{-\mathrm i \langle \phi, Q\rangle}.
\end{equation}
As discussed above, $K_j(\phi)$ are supported in a neighbourhood of both ribbons of $\partial X_j$ while $\tilde K_j(\phi)$ are supported in a neighbourhood of $\partial X_j^-$ only. In fact, more can be said: since $\partial_j H(\phi)$ is a sum of two terms with disjoint supports, $K_j(\phi)$ is itself a sum of two terms supported in a neighbourhood of $\partial X_j^-$ and $\partial X_j^+$ respectively. Hence the commutator $[K_1(\phi),K_2(\phi)]$ is a sum of four terms, each supported in a neighbourhood of a different corner. On the other hand, $[\tilde K_1(\phi),\tilde K_2(\phi)]$ is supported in a neighbourhood of the single corner $\Delta$ --- we shall take an $L/4$-fattening of $\Delta$ --- where it is approximately equal to the restriction of $[K_1(\phi),K_2(\phi)]$. Hence, 
\begin{align}
\curv(\phi) = \iu \Tr(\tilde{P}(\phi)[\tilde{K}_1(\phi),\tilde{K}_2(\phi)]) 
&= \iu \Tr\left(\tilde{P}(\phi)\tr_{\Gamma\setminus\Delta^{L/4}}[\tilde{K}_1(\phi),\tilde{K}_2(\phi)]\right) + \caO(L^{-\infty}) \nonumber\\
&= \iu \Tr\left(\tilde{P}(\phi)\tr_{\Gamma\setminus\Delta^{L/4}}[K_1(\phi),K_2(\phi)]\right) + \caO(L^{-\infty})\label{kappa}
\end{align}
where we noted in the second equality that the restriction of $[K_1(\phi),K_2(\phi)]$ to the corner $\Delta$ is equal to the partial trace over $\Gamma\setminus\Delta$ because each of the four terms is traceless. Now, in the neighbourhood of $\Delta$, the ground states $P(\phi)$ and $\tilde P(\phi)$ are approximately equal. Indeed, as noted in~(\ref{supp T vs TAT}) $H(\phi)-\tilde H(\phi)$ is supported away from $\Delta$ and local perturbations perturb gapped ground states locally, see~\cite{Sven}. Hence, (\ref{kappa}) can further be written as
\begin{equation*}
\curv(\phi) = \iu \Tr(P(\phi)\tr_{\Gamma\setminus\Delta^{L/4}}[K_1(\phi),K_2(\phi)]) + \caO(L^{-\infty})
\end{equation*}
By~(\ref{isosp}), the fact that the local charge is on-site and cyclicity again, this is independent of $\phi$, which concludes the argument. It is interesting to note that the corner $\Delta$ has an echo in the analysis of the non-interacting situation, see~\cite{AvronSeilerSimonCMP94, BulkEdgeMobility, KitaevHoneycomb}.

\subsection{The case of the fractional quantum Hall effect.}
The description of the simple mechanism of the proof above allows us to explain how the results of this paper can be extended to cover fractional conductance, as also explained in Section~9 of the original~\cite{HastingsMichalakis}. Let us modify Assumption \ref{tilde gap} by allowing that there is a $q$-dimensional spectral subspace of $\tilde{H}(\phi)$, the range of a spectral projector $\tilde P(\phi)$. It is not important that the Hamiltonian $\tilde{H}(\phi)$ is degenerate on this space, but we still call the range of $\tilde P(\phi)$ the ground state space and we require an $L$-independent gap to other parts of the spectrum.  By construction, $q$ is $\phi$-independent and we also assume it to be $L$-independent, for $L$ large enough. Additionally, we require a \emph{topological order} condition, see~(\ref{TQO}) below.

 The argument has two parts. First of all, let $\omega_{\phi} = q^{-1}\tilde P(\phi)$ be the chaotic ground state, i.e.\ the incoherent superposition of all ground states. The argument above runs unchanged but for a factor $q^{-1}$ that is carried through from the definition~(\ref{Kubo}). Hence  $\phi\mapsto\Tr(\tilde{P}(\phi)[\tilde{K}_1(\phi),\tilde{K}_2(\phi)])$ remains approximately constant and integrates to an integer, proving that the expression $\omega( [\tilde{K}_1,\tilde{K}_2]))$ (we suppress the $\phi$-dependence) is of the form $p/q$ for $p\in\bbN$. 

 Secondly, let $\tilde\omega=\tilde\omega_{\phi}$ be any (pure) ground state, i.e.\ a positive normalized functional that is supported on $\tilde P(\phi)$ and let us assume the topological order condition: for any local observable $O$ with support independent of $L$, we have 
\begin{equation}\label{TQO}
\tilde\omega (O) =  \omega (O) + \caO(L^{-\alpha})
\end{equation}
for some $\alpha>0$.
 Then,  since $[\tilde{K}_1,\tilde{K}_2]$ can be approximated by an observable located in the corner $\Delta^{L/4}$, the topological order condition implies that
\begin{equation*}
\tilde\omega ([\tilde{K}_1,\tilde{K}_2]) = \omega([\tilde{K}_1,\tilde{K}_2]) + \caO(L^{-\alpha})
\end{equation*}
This proves fractional quantization for any ground state.
Although the argument is compelling, one should keep in mind that there is to date no proven example of an interacting Hamiltonian exhibiting such fractional quantization with $p/q \notin\bbZ$.

\subsection{The actual proof}
In the following lemma, we compare $[\tilde K_1(\phi),\tilde K_2(\phi)]$ with ${G^\Delta}(\phi)$ (defined below), which is an adequate replacement of $\tr_{\Gamma\setminus\Omega}[K_1(\phi),K_2(\phi)]$. We denote by $\tau^{\phi}_t$ and $\tilde \tau^{\phi}_t$ the time-evolutions generated by $H(\phi)$ and $\tilde H(\phi)$. Let
\begin{equation*}
{G^\Delta}(\phi) := \int \dd t \; W_{}(t) \int \dd t' \; W_{}(t') \left[ \tau^{\phi}_t (\partial_1 H_{\Delta^{L/8}}(\phi)), \tau^{\phi}_{t'}(\partial_2 H_{\Delta^{L/8}}(\phi)) \right],
\end{equation*}
where we denote
\begin{equation*}
H_Z := \sum_{X\subset Z} \Phi(X)
\end{equation*}
for any subset $Z\subset\Gamma$. The following lemma establishes that ${G^\Delta}(\phi)$ is localized in a neighbourhood of $\Delta$ and that it is a good approximation of~$[\tilde K_1(\phi),\tilde K_2(\phi)]$.

\begin{lemma}\label{lma: K corner}
We have
\begin{equation*}
\left\Vert [\tilde K_1(\phi),\tilde K_2(\phi)] - \tr_{\Gamma \setminus \Delta^{L/4}} \left( {G^\Delta}(\phi) \right) \right\Vert = \caO(L^{-\infty})
\end{equation*}
and
\begin{equation*}
\tr_{\Gamma \setminus \Delta^{L/4}} \left( {G^\Delta}(\phi) \right) = U^*(\phi) \; \tr_{\Gamma \setminus \Delta^{L/4}} \left( {G^\Delta}(0) \right) \; U(\phi).
\end{equation*}
\end{lemma}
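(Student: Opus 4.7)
For the second identity, the plan is to first establish the gauge-covariance $G^\Delta(\phi) = U^*(\phi) G^\Delta(0) U(\phi)$ and then push $U(\phi)$ through the partial trace. Covariance is obtained by substituting into the defining double integral and using three observations: $H(\phi) = U^*(\phi) H U(\phi)$, so $\tau_t^\phi(A) = U^*(\phi)\tau_t^H(U(\phi) A U^*(\phi))U(\phi)$; all $Q_k$ pairwise commute, hence $U(\phi)$ commutes with every $Q_k$; and $\partial_j H_{\Delta^{L/8}}(\phi) = \iu U^*(\phi)[Q_j, H_{\Delta^{L/8}}]U(\phi)$ (by differentiating $H_{\Delta^{L/8}}(\phi) = U^*(\phi) H_{\Delta^{L/8}} U(\phi)$). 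The intermediate $UU^*$ pairs then collapse and leave one overall conjugation around $G^\Delta(0)$. For the second half, because each $Q_x$ is a single-site operator, $U(\phi) = \bigotimes_x u_x(\phi)$ factorises as $U_{\Delta^{L/4}}(\phi) \otimes U_{\Gamma\setminus\Delta^{L/4}}(\phi)$; the $\Gamma\setminus\Delta^{L/4}$ factor cancels inside the partial trace by cyclicity on that region, the $\Delta^{L/4}$ factor passes through unchanged, and the identity follows under the embedding~\eqref{embedding}.

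For the first identity, my plan is to carry out three approximations, each producing an $\caO(L^{-\infty})$ error in operator norm. First, replace $\tilde\tau_t^\phi$ by $\tau_t^\phi$ in the defining integrals of the $\tilde K_j(\phi)$. The Hamiltonians $\tilde H(\phi)$ and $H(\phi) = U^*(\phi) H U(\phi)$ differ by $H_{\Gamma\setminus\Sigma} - U^*(\phi) H_{\Gamma\setminus\Sigma} U(\phi)$, which is supported on $\partial X_1^+ \cup \partial X_2^+$ at distance of order $L/2$ from $\Delta$. A Duhamel estimate refining the proof of Lemma~\ref{lma:Diff of H} (using this specific support rather than the full $\partial X_1 \cup \partial X_2$), combined with the decay of $W$ from~\eqref{eq: decay of W}, gives the required bound when the dynamics act on operators localized near $\Delta$. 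Second, replace $\partial_j \tilde H(\phi) = \iu U^*(\phi)[Q_j, H^\Sigma]U(\phi)$ by $\partial_j H_{\Delta^{L/8}}(\phi) = \iu U^*(\phi)[Q_j, H_{\Delta^{L/8}}]U(\phi)$; the difference has support on $\partial X_j^- \setminus \Delta^{L/8}$ and on the remote intersections of $\partial X_1$ and $\partial X_2$ inside $\Sigma$. Third, apply Lemma~\ref{lma:trace of evolved} to identify each $\tau_t^\phi(\partial_j H_{\Delta^{L/8}}(\phi))$ with its partial trace over $\Gamma\setminus\Delta^{L/4}$ up to $\caO(L^{-\infty})$; after pulling the partial trace outside the double time integral one recognises $\tr_{\Gamma\setminus\Delta^{L/4}}(G^\Delta(\phi))$.

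The most delicate point is the second approximation: the pieces of $\partial_j \tilde H(\phi)$ supported far from $\Delta$ produce, after quasi-adiabatic evolution, order-one contributions to each individual $\tilde K_j(\phi)$, so one must show that when inserted into the commutator $[\tilde K_1(\phi), \tilde K_2(\phi)]$ and compared with an operator supported on $\Delta^{L/4}$, their net effect is $\caO(L^{-\infty})$. The mechanism is that pairs of such remote contributions either sit in regions at mutual distance of order $L$, so their commutator is exponentially small by Lieb-Robinson (Lemma~\ref{lemma: lr}), or else their overlap with the local commutator structure on $\Delta^{L/4}$ is controlled by the faster-than-polynomial tail bound on $W$ in~\eqref{eq: decay of W}. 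Assembling the three approximations then yields the first identity.
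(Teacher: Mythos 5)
Your covariance argument coincides with the paper's: $G^\Delta(\phi)=U^*(\phi)G^\Delta(0)U(\phi)$ follows from $H(\phi)=U^*(\phi)HU(\phi)$, and the on-site product structure of $U(\phi)$ lets it pass through $\tr_{\Gamma\setminus\Delta^{L/4}}$; that half is fine. The gap is in the first identity, and it is twofold. Your opening move --- exchanging $\tilde\tau^\phi_t$ for $\tau^\phi_t$ while the evolved observable is still the \emph{full} $\partial_j\tilde H(\phi)$ --- cannot produce an $\caO(L^{-\infty})$ error. Indeed $\partial_j\tilde H=U^*(\phi)\,\iu[Q_j,H^\Sigma]\,U(\phi)$ is supported along the entire ribbon $\partial X_j^-$ (plus small squares where $\partial X_j^+$ meets $\Sigma$), and this support comes within distance of order $R$ of $\supp(\tilde H(\phi)-H(\phi))\subset\partial X_1^+\cup\partial X_2^+$ near the far corners $(L/2,0)$ and $(0,L/2)$; a Duhamel bound in the spirit of Lemma~\ref{lma:Diff of H} then yields an estimate of order $L$ rather than a small one. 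This is exactly why the paper performs the substitutions in the opposite order: it first passes to the localized derivative $\partial_j H_{\Delta^{L/8}}$, which lies at distance $3L/8-2R$ from $\supp(H-\tilde H)$, and only then exchanges $\tau_t$ and $\tilde\tau_t$ (for $|t|\le L/(32v)$, after cutting the time integrals using~\eqref{eq: decay of W}).

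The step you yourself call delicate is also not closed by the dichotomy you propose. The remote parts of $\partial_1\tilde H$ and $\partial_2\tilde H$ are \emph{not} always at mutual distance of order $L$: $\partial_2\tilde H$ has a piece where $\partial X_2^+$ meets $\Sigma$, i.e.\ sitting on $\partial X_1^-$ near $(L/2,0)$ (symmetrically for $\partial_1\tilde H$ near $(0,L/2)$), and the decay of $W$ is a statement about time, not space, so it cannot suppress the commutator of two $O(1)$ operators dressed at the same remote location. The paper's mechanism is different and concrete: with $A_j=\tilde\tau(\partial_j\tilde H)$ and $B_j=\tilde\tau(\partial_j\tilde H_{\Delta^{L/8}})$ it telescopes $[A_1,A_2]-[B_1,B_2]=[A_1,A_2-B_2]+[A_1-B_1,B_2]$, uses unitarity of $\tilde\tau$ to reduce to equal-time commutators, and then applies Lemma~\ref{lemma: lr} with the spatial separation of order $L/8$ between one full derivative and the remote part of the other (a statement which itself quietly sets aside the corner pieces just mentioned, so a careful write-up must track them in either approach). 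To make your route work you would have to reproduce an argument of this type; the remaining ingredients --- the time cutoff, Lemma~\ref{lma:trace of evolved} with $X=\Delta^{L/8}$, $r=L/8$, and $\Vert\partial_j H_{\Delta^{L/8}}\Vert\le CL$, and the identity $H_{\Delta^{L/8}}=\tilde H_{\Delta^{L/8}}$ --- do match the paper's proof.
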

\begin{proof}
To prove the first estimate, we pick up a $\phi\in\bbT^2$ and drop the $\phi$ dependence in this proof for notational clarity. First of all, we note that the operator $G^{\Delta}$ is concentrated around the set $\Delta$. Indeed, the commutator $[\partial_1 H_{\Delta^{L/8}}, \partial_2 H_{\Delta^{L/8}}]$ is strictly supported on $\Delta^{L/8}$, the time evolution $\tau_t$ can be controlled using the Lieb-Robinson bound for short times, and the good decay properties of $W_{}$ take care of long times, see also~\cite{Sven}. To make this precise, we show that
\begin{equation}\label{eq:KDelta lives on Delta}
\Vert {G^\Delta} - \tr_{\Gamma \setminus \Delta^{L/4}} ( {G^\Delta} ) \Vert = \caO(L^{-\infty}).
\end{equation}
Using the good decay properties~(\ref{eq: decay of W}) of $W_{}$ we can restrict the integrals to $[-T, T]$ with $T = L/(32v)$,  making an error of order $\caO(L^{-\infty})$:
\begin{equation*}
\Vert {G^\Delta}- \tr_{\Gamma \setminus \Delta^{L/4}} ( {G^\Delta} ) \Vert \leq \left\Vert   \int_{-T}^T \dd t \; W_{}(t) \int_{-T}^T \dd t' \; W_{}(t')  f(t, t') \right\Vert +\caO(L^{-\infty}),
\end{equation*}
where the integrand is given by
\begin{equation*}
f(t, t') = \left( \I  - \tr_{\Gamma \setminus \Delta^{L/4}}  \right) [\tau^{}_t(\partial_1 H_{\Delta^{L/8}}), \tau^{}_{t'}(\partial_2 H_{\Delta^{L/8}})].
\end{equation*}
To estimate $\Vert f(t,t') \Vert$ we use Lemma~\ref{lma:trace of evolved} with $X = \Delta^{L/8}$, $r= L/8,\,\vert t\vert\leq L/32v$ and the fact that $\Vert  \partial_j H_{\Delta^{L/8}} \Vert  \leq CL$ to bound the integral by $\caO(L^{-\infty})$. The claim then follows.

The next step is to show that $[\tilde K_1, \tilde K_2]$ is close in norm to $G^{\Delta}$. Note first of all that, like in the argument above, we can restrict the integrals to $[-T, T]$ making an error of order $\caO(L^{-\infty})$. Therefore, it is sufficient to bound the norm of
\begin{equation}\label{eq: W tilde W}
\int_{-T}^T \dd t \; W_{}(t) \int _{-T}^T \dd t' \; W_{}(t')  \left(  [\tilde\tau^{}_t(\partial_1 \tilde H), \tilde\tau^{}_{t'}(\partial_2 \tilde H) ]   -    [\tau^{}_t(\partial_1 H_{\Delta^{L/8}}), \tau^{}_{t'}(\partial_2 H_{\Delta^{L/8}}) ]  \right).
\end{equation}
Since $\dist ( \Delta^{L/8}, \supp(H - \tilde H)  ) = 3L/8 - 2R$, it follows from Lemma~\ref{lma:Diff of H} that
\begin{equation*}
\Vert \tau^{}_t(\partial_j H_{\Delta^{L/8}}) - \tilde \tau^{}_{t}(\partial_j H_{\Delta^{L/8}}) \Vert = \caO(L^{-\infty})
\end{equation*}
for any $t \in [-T, T]$. Thus we can replace the evolutions $\tau^{}_t$ by $\tilde \tau^{}_t$, making an error that is again a $\caO(L^{-\infty})$. We can further replace $\partial_j H_{\Delta^{L/8}}$ by $\partial_1 \tilde H_{\Delta^{L/8}}$ without any error since by construction $H_{\Delta^{L/8}} = \tilde H_{\Delta^{L/8}}$ (as functions of $\phi$).
Altogether, we estimate the integrand of~(\ref{eq: W tilde W}) by 
\begin{align*}
\Vert W\Vert_\infty^2 &\Vert [\tilde\tau^{}_t(\partial_1 \tilde H), \tilde\tau^{}_{t'}(\partial_2 \tilde H) ]   -    [\tilde\tau^{}_t(\partial_1 \tilde H_{\Delta^{L/8}}), \tilde \tau^{}_{t'}(\partial_2 \tilde H_{\Delta^{L/8}}) ] \Vert \\
 &\leq \Vert W\Vert_\infty^2  \left(\Vert [\tilde \tau^{}_{t - t'}(\partial_1 \tilde H), \partial_2 \tilde H - \partial_2 \tilde H_{\Delta^{L/8}}]  \Vert + \Vert [\tilde \tau^{}_{t - t'}(\partial_1 \tilde H_{\Delta^{L/8}} - \partial_1 \tilde H), \partial_2 \tilde H_{\Delta^{L/8}}] \Vert \right)\\
& =\caO(L^{-\infty}),
\end{align*}
for any $t, t' \in [-T, T]$. To obtain the last estimate we used the Lieb-Robinson bound, noting that the supports of $\partial_1 \tilde H$ and $\partial_2 \tilde H - \partial_2 \tilde H_{\Delta^{L/8}}$ are separated by a distance $L/8$ while $v \vert t-t'\vert\leq L/16$. Hence,
\begin{equation*}
\left\Vert[\tilde K_1, \tilde K_2] - G^{\Delta}\right\Vert = \caO(L^{-\infty}),
\end{equation*}
which, together with~(\ref{eq:KDelta lives on Delta}), concludes the proof of the first claim. 

To get the covariance we note that 
$G^{\Delta}(\phi) = U^*(\phi) G^{\Delta}(0) U(\phi)$ which follows directly from $H(\phi) = U^*(\phi) H U(\phi)$. 
Then, the covariance of $\tr_{\Gamma \setminus \Delta^{L/4}}  G^{\Delta}(\phi)$ follows upon noting that $U(\phi)$ is a product over single site unitaries.
\end{proof}

\begin{proof}[Proof of Proposition~\ref{prop: constant curvature}] 
By~(\ref{Kubo}) and Lemma~\ref{lma: K corner}, 
\begin{equation*}
\curv(\phi) = \Tr \left( \tilde P(\phi) \; \tr_{\Gamma \setminus \Delta^{L/4}} \left( {G^\Delta}(\phi) \right) \right) + \caO(L^{-\infty}).
\end{equation*}
Recalling the general flux Hamiltonians defined in Section~\ref{sec:fluxes} and the relations (\ref{HT}\ref{HTAT}), Assumption~\ref{tilde gap} ensures that the spectral gap above the ground state energy does not close along the smooth interpolation $[0,1]\ni s \mapsto H(\phi, (s-1)\phi)$. Let $V(s)$ be the quasi-adiabatic unitaries corresponding to this homotopy, as provided by Lemma \ref{lem: quasi adiabatic}. By construction, the derivative $\frac{\dd}{\dd s} H(\phi, (s-1)\phi)$
is supported on $\partial X_1^+ \cup \partial X_2^+$. For any observable~$O$, we can write
$$
V^*(1)O V(1) = O+ \iu \int_0^1 V^*(s)  [K(s),O]    V(s) \dd s.
$$
Let us now assume that $O$ is supported in $\Delta^{L/4}$. Then by the expression \eqref{def: d} of $K(s)$, the fact the distance of the support of $\frac{\dd}{\dd s} H(\phi, (s-1)\phi)$ to $\Delta^{L/4}$ is $3L/4 - 2R$, the decay of $W(t)$ and the Lieb-Robinson bound, we deduce that 
\begin{equation*}
\Vert [K(s),O] \Vert  = \caO(L^{-\infty}).
\end{equation*}
Therefore, applying this with $O=\tr_{\Gamma \setminus \Delta^{L/4}} {G^\Delta}(\phi)$ and using $\tilde P(\phi)=V(1) P(\phi)  V^*(1)$ and cyclicity of the trace, we get 
\begin{equation*}
\Tr \left( \tilde P(\phi) \; \tr_{\Gamma \setminus \Delta^{L/4}} \left( {G^\Delta}(\phi) \right) \right) = \Tr\left( P(\phi) \;  \tr_{\Gamma \setminus \Delta^{L/4}} \left( {G^\Delta}(\phi) \right) \right) + \caO(L^{-\infty}).
\end{equation*}
Now, the covariance of $P(\phi)$ and of $\tr_{\Gamma \setminus \Delta^{L/4}} \left( {G^\Delta}(\phi) \right)$ provided in~(\ref{eq: cov of P}) and Lemma~\ref{lma: K corner} respectively, show that the trace on the right is in fact independent of $\phi$ by cyclicity, settling the claim. 
\end{proof}
\begin{proof}[Proof of Theorem \ref{thm:mainupgrade}]
Recall from~(\ref{Kubo}) that
\begin{equation}
\curv_L = \iu \omega_{0,L}([\tilde{K}_1,\tilde{K}_2]).
\end{equation}
where both the $\tilde K_j$'s and the state $\omega_{0,L}$ depend on $L$ (we have made the dependence explicit in the notation). The claim will follow from the fact that $[\tilde{K}_1,\tilde{K}_2]$ can be approximated uniformly in $L$ by a local observable $\Upsilon^\ell$ supported in $\Delta^\ell$. The error decays rapidly in $\ell$, and the expectation value of $\Upsilon^\ell$ converges by assumption.

Indeed, (\ref{def: d}) and the arguments repeatedly used in this article yield that $\tilde K_j$ is a sum of local terms in the form $\tilde K_j = \sum_{X\subset \Gamma}\Psi^\Gamma(X)$, where $\Vert \Psi^\Gamma(X) \Vert = \caO(\mathrm{diam} (X)^{-\infty})$ uniformly in $L$, and that $\Psi^\Gamma(X)$ converges in norm as $L\to\infty$ for any fixed $X$, see~\cite{Sven}. Hence, for any $\ell\in\bbN$, the local observable $\tr_{\Gamma\setminus \Delta^\ell}[\tilde{K}_1,\tilde{K}_2]$ converges to a $\Upsilon^\ell\in\caB(\caH_{\Delta^\ell})$ as $L\to\infty$. Since moreover,
\begin{equation*}
(\I - \tr_{\Gamma\setminus \Delta^\ell})([\tilde{K}_1,\tilde{K}_2]) = \caO(\ell^{-\infty}),
\end{equation*}
uniformly in $L$, we have
\begin{align*}
[\tilde{K}_1,\tilde{K}_2] - \Upsilon^\ell
&= (\I - \tr_{\Gamma\setminus \Delta^\ell})([\tilde{K}_1,\tilde{K}_2])
+ \left( \tr_{\Gamma\setminus \Delta^\ell} [\tilde{K}_1,\tilde{K}_2] - \Upsilon^\ell \right) \\
&= \caO(\ell^{-\infty}) + \caO(L^{-\infty}).
\end{align*}
Hence, 
\begin{equation*}
\lim_{L\to\infty }\left\vert \curv_L - \iu \omega_{0,L}(\Upsilon^\ell)\right\vert 
=\caO(\ell^{-\infty}).
\end{equation*}
By assumption $\omega_{0,L}(\Upsilon^\ell)$ converges, concluding the proof.
\end{proof}

{
\section{Acknowledgements}
\noindent We would like to thank Y.~Avron for his careful reading of the first version of this manuscript, and for his many comments which helped improve this article.
 WDR acknowledges the support of the Flemish Research Fund FWO under grant G076216N. AB, MF and WDR have been supported by the InterUniversity Attraction Pole phase VII/18 dynamics, geometry and statistical physics of the Belgian Science Policy.
}

\end{document}